\newtheorem{CLAIM}{Claim}
\renewenvironment{claim}{\begin{CLAIM} \hspace{-.85em} {\bf .} \rm}%
  {\end{CLAIM}}
\newcommand{\remove}[1]{}
\newcommand{\union}{\cup}
\newcommand{\set}[1]{\{ #1 \}}
\newcommand{\nbit}{\bit^n}
\newcommand{\bit}{\set{0,1}}
\newcommand{\MYPROD}{\prod^{\ell}}
\newcommand{\MYPRODG}{\prod^{\ell}_{G}}
\newcommand{\MAP}{{\rm MAP}}
\newcommand{\DISJ}{{\rm DISJ}}
\newcommand{\NOF}{{\rm NOF}}
\newcommand{\MYOPIC}{{\rm MYOPIC}}
\newcommand{\VIEW}{{\rm VIEW}}
\newcommand{\SYM}{{\rm SYM}}
\newcommand{\LEN}{{\rm LEN}}
\newcommand{\eqref}[1]{Equation~(\ref{#1})}
\begin{document}
\title{On the Power of Multiplexing \\ in Number-on-the-Forehead Protocols}

\author{Eldar Aharoni \and {Eyal Kushilevitz}}

\institute{{Department of Computer Science, Technion, Haifa, Israel \\
\email{\{eldara|eyalk\}@cs.technion.ac.il}}
}

\maketitle

\begin{abstract}
We study the direct-sum problem for $k$-party ``Number On the Forehead'' (NOF) deterministic communication complexity.
We prove several positive results, showing that the complexity of computing a function $f$ in this model, on $\ell$ instances, may be significantly cheaper than $\ell$ times the complexity of computing $f$ on a single instance. Quite surprisingly, we show that this is the case for ``most'' (boolean, $k$-argument) functions.
We then formalize two-types of sufficient conditions on a NOF protocol $Q$, for a single instance, each of which guarantees some communication complexity savings when appropriately extending $Q$ to work on $\ell$ instances.
One such condition refers to what each party needs to know about inputs of the other parties, and the other condition, additionally, refers to the communication pattern that the single-instance protocol $Q$ uses.
In both cases, the tool that we use is ``multiplexing'': we combine messages sent in parallel executions of protocols for a single instance, into a single message for the multi-instance (direct-sum) case, by xoring them with each other.
\end{abstract}

\section{Introduction}

Communication Complexity,
introduced by Yao‎~\cite{Yao79}, studies the amount of communication parties need to exchange in order to carry out a distributed computation. In its most basic setting (so-called two-party communication complexity) two parties, Alice and Bob, each having $n$-bit input ($x$ and $y$ respectively) want to compute the value $f(x, y)$, for some fixed $f$. To do so, they use a predefined communication protocol $Q$. The cost of $Q$ is the number of bits exchanged in $Q$, for the worst choice of $(x, y).$ The communication complexity of a function $f$, denoted $D(f)$, is the minimal cost of an appropriate $Q$. The field deals with protocols for interesting functions and mostly with lower bounds for them, and has a wide range of applications (see~\cite{KN} for an overview).

In the multi-party case, there are several common models.
The ``number on the forehead'' ($\NOF$) model of communication complexity, which is studied also in the current paper, was introduced by Chandra, Furst and Lipton~\cite{CFL83}. In this model, $k$ parties wish to jointly evaluate a function $f : (\nbit)^{k} \rightarrow \bit$, where each input of $f$, i.e. $x_i \in \nbit$ for $i\in[k]$, is given to all parties {\em except} $P_i$ (pictorially, it is viewed as if $x_i$ appears on the forehead of $P_i$). In order to compute $f$, the parties communicate by writing bits on a shared board. The communication complexity of a NOF protocol, is the number of bits written on the board, for the worst input. The $\NOF$ model is not as well understood (as, for example, the two-party case), mainly because of the complexity of dealing with the shared inputs, as every piece of the input is seen by all the parties but one. An alternative multi-party model is the so-called ``Number in Hand'' model, where each $x_i$ is given to party $P_i$ (see, e.g.,~\cite{DKW,PVZ}).
Even though the ``Number in Hand'' model
may seem more natural, results for the $\NOF$ model have many applications in areas like branching programs complexity, boolean circuit complexity, proof complexity and pseudorandom generators. In all these applications, the overlap of information in the NOF setting, models well the problem in question. Note that for $k = 2$, both models collide with the standard two-party model.
We next reference several papers, which studies $\NOF$ communication complexity. Separation of classes of complexity, within the $\NOF$ communication complexity, is examined, for example, in~\cite{BDPW10} and~\cite{DPV08}. The $\NOF$ communication complexity of the disjointness function (defined in the following subsection) is studied in~\cite{BPSW06}~\cite{CA08}~\cite{LS08} and~\cite{S12}.

\medskip

Direct-sum questions usually refer to the task of simultaneously solving multiple instances of the same problem, and ask whether this task can be done more efficiently than simply solving each instance separately. For example, the task of computing a function $f$ on $\ell$ instances, is a direct-sum problem denoted $f^\ell$. Direct-sum problems for two-party communication complexity have been first studied in‎~\cite{KRW95}, where the authors showed that if a certain direct-sum result holds for deterministic communication complexity of relations, then {\rm{P}} $\neq$ {\rm{NC}}$^1$.
In~\cite{FKNN95}, the authors proved that $D(f^\ell) \geq \ell \cdot (\sqrt{D(f)/2} - 2 \log n - O(1))$ and presented a partial function $f$ such that $D(f) = \Theta (\log {n})$, whereas $D(f^\ell) = O(\ell + \log n \cdot \log \ell)$. They also proved that for any (even partial) function $f$, for one-round protocols, $D^1(f^\ell) \geq \ell \cdot (D^1(f) - O(\log {n}))$. In~\cite{KKN92}, the same is proved for the two-round deterministic case. Namely, in these cases no significant savings is possible. For general two-party deterministic protocols, this fundamental question (stated as a conjecture in~\cite{ABG+01}) remains open:
\begin{question}
Is it true that, for all functions $f :\nbit \times \nbit \rightarrow \bit$, only a minor saving is possible; formally, $D(f^{\ell}) \geq \ell \cdot (D(f) - O(1))$?
\end{question}

While this is open for $k=2$, we prove that the analogue question for the NOF model has negative answer, for every $k > 2$. Namely, savings is ``usually'' possible.

\subsection{Our Results}

Our first result shows that savings is possible for the direct-sum problem of ``most'' boolean functions, in the deterministic
$k$-party $\NOF$ model.
Namely, assume for convenience that $(k - 1)$ divides $\ell$, then $D^{\NOF}(f^{\ell})\leq \ell\cdot \frac{n}{k - 1} + \ell$, for all boolean functions $f$. Since ``most'' boolean functions have high communication complexity, specifically $D^{\NOF}(f)\ge n-\log(k+2)$~\cite{G93}, then
for most functions our upper bound  is significantly cheaper than $\ell$ times the complexity of computing $f$ on a single instance.

The above upper bound turns out not too difficult to prove. The main idea is that of appropriately {\em multiplexing} messages sent in various executions of the underlying protocol for $f$.

We proceed to further understanding the power of multiplexing. We formulate conditions under which multiplexing is possible.
For this, we define a variant of the $\NOF$ model, in which parties can view only a subset of other parties' inputs. We give sufficient conditions on $\ell$ such protocols that guarantee that, instead of simply running the $\ell$ protocol simultaneously, we can appropriately combine them into one (standard) $\NOF$ protocol that solves all $\ell$ instances, and this leads to some savings in communication. Next, we present sufficient conditions on a single protocol for $f$, which allow us extending it to a $\NOF$ protocol that solves $\ell$ instances of the original problem, with appropriate savings in communication complexity. The conditions used are technical, but we show, for instance, that for the important case of symmetric functions, they can be
significantly simplified. We give examples of appropriate $\NOF$ protocols that are constructed by applying our technique.

We also use the so-called $\MYOPIC$ variant of the NOF model~\cite{G06,C07}, which limits the way the inputs are shared between the parties and, additionally, limits the communication pattern between the parties to be one-way. We give conditions on $\ell$ $\MYOPIC$ protocols, which allow extending them to a $\NOF$ protocol solving $\ell$ instances, with appropriate savings in communication complexity.

\subsection{Related Work}

The direct-sum problem was also studied with respect to other communication complexity measures.
For the non-deterministic case, it was shown~\cite{FKNN95}
that $N(f^\ell) \geq \ell \cdot (N(f) - \log n - O(1))$.  An alternative proof was given in~\cite{KKN92}.
For the randomized case, it is shown~\cite{FKNN95} that some savings is possible; specifically for $EQ$, the equality function, $R(EQ^{\ell}) = O(\ell + \log n)$, whereas $R(EQ) = \Theta(\log n)$. Using information-complexity arguments, Barak et al.~\cite{BBCR09} showed that (ignoring logarithmic factors) computing a function $f$ on $n$ instances, using a randomized protocol, requires
$\Omega(\sqrt n)$ times the randomized communication complexity of $f$.

Not much is known regarding the direct-sum problem $f^\ell$, in the
$\NOF$ model.
In ~\cite{BPSW06}, the authors state that it is not clear how to prove a direct sum theorem for $\NOF$. Certain direct-sum like operators, for the case of $n = 1$, were considered in~\cite{ACFN12}. The authors proved that for any symmetric boolean function $\SYM : {\bit}^\ell \rightarrow \bit$ (that depends only on the number of ``1'' inputs), and for any boolean function $f_k : {\bit}^k \rightarrow \bit$, the communication complexity $D(\SYM(f_{k}^\ell)) = O(\frac{\ell}{2^k} \cdot \log \ell + k \log \ell).$ The
proof works for $k > 1 + \log \ell$.
It is also shown that $D^{||}(\SYM(f_{k}^{\ell})) =O((\log \ell)^3)$. Here, $D^{||}$ refers to the communication complexity in the simultaneous $\NOF$ model, where each party, without interaction, sends a single message to an external referee, who computes the output based only on the messages it received.

For randomized protocols, a strong direct product theorem for a problem, states that when bounding the communication complexity to be less than, say, $\ell$ times the communication complexity required for solving the problem on one instance, with constant error probability, then the probability of solving $\ell$ instances of the problem with constant error, is exponentially small in $\ell$. Strong direct-product results in the two-party model, were shown for the one-way public-coin setting~\cite{J11} and bounded round public-coin‎ setting~\cite{JPY12}.

One of the most studied functions in communication complexity is the disjointness function, $\DISJ$. Its $k$-argument variant $\DISJ_{k}$ is $1$ iff, there is no $i \in [n]$ that belongs to all the $k$ sets. That is, the inputs, viewed as sets, are disjoint.
In one of the few results concerning direct-product in the $\NOF$ model, Sherstov~\cite{S12} showed that $\ell \cdot \Omega(\frac{n}{4^k})^{\frac{1}{4}}$ bits are required to solve $(1 - \epsilon) \cdot \ell$ out of $\ell$ instances of the function $\DISJ_{k}$, for small enough $\epsilon > 0$. Clearly, solving all $\ell$ instances correctly, will take at least that much communication.

In~\cite{ABG+01,BDKW10}, the communication complexity of other direct-sum like ``operators'' on multiple instances was studied. For example, the problem of choosing one instance from $\ell$ given instances and solving that instance. Alternatively, eliminating one $\ell$-tuple of outputs which is \emph{not} the correct one. In most of these cases, demonstrating savings turns out to be difficult.

\section{Direct-Sum in the NOF Model -- Basic Examples}

As a first example, we show that savings is possible for the direct-sum problem of most boolean functions in the deterministic communication complexity $\NOF$ model. Specifically, for all functions whose NOF complexity is ``high''. From now on, we denote by $x_{i, j}$ the input on $P_j$'s forehead in the $i^{th}$ instance.

\begin{lemma}
\label{lem-nof-ds}
Let $f_{k}: (\nbit)^k \rightarrow \bit$ be an arbitrary $k$-argument boolean function. Then,
$D^{\NOF}(f_{k}^{k-1})\leq n + k-1$.
\end{lemma}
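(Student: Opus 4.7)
The plan is to exploit the fact that there is one ``spare'' party $P_k$, while the other $k-1$ parties are each designated to compute the output of a different instance. I will use $P_k$ to broadcast a single $n$-bit message that simultaneously conveys the missing information for all $k-1$ other parties, via xor-multiplexing.

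More concretely, for each $i\in[k-1]$, assign party $P_i$ the task of outputting the value of $f$ on instance $i$. Party $P_i$ already sees every input of instance $i$ except the one on its own forehead, namely $x_{i,i}$. So let me define $y_i \eqdef x_{i,i} \in \nbit$ for $i\in[k-1]$. The key visibility observation is: party $P_k$ sees $x_{i,j}$ for every $j\neq k$, hence $P_k$ sees all of $y_1,\dots,y_{k-1}$; and for $i\in[k-1]$, party $P_i$ sees $x_{i',i'}$ for every $i'\in[k-1]$ with $i'\neq i$, so $P_i$ sees every $y_{i'}$ with $i'\neq i$.

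The protocol is then: (i) $P_k$ writes the $n$-bit string $z \eqdef y_1\oplus y_2\oplus\cdots\oplus y_{k-1}$ on the board; (ii) each $P_i$ (for $i\in[k-1]$) recovers $y_i = z \oplus \bigoplus_{i'\neq i} y_{i'}$, which it can compute since it knows all the summands; (iii) having now learned $x_{i,i}$, party $P_i$ knows the entire input of instance $i$ and writes $f(x_{i,1},\dots,x_{i,k})$ on the board in one bit. The total communication is $n + (k-1)$ bits, as required.

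The only thing to verify carefully is the visibility bookkeeping in step (ii), i.e., that $P_i$ really sees every $y_{i'}$ for $i'\neq i$; this is immediate from the $\NOF$ rule ($P_i$ is blind only to inputs on its own forehead, and $y_{i'}=x_{i',i'}$ sits on $P_{i'}$'s forehead with $i'\neq i$). I do not anticipate a real obstacle: the argument is a clean illustration of the ``multiplex the messages by xor'' idea that the paper advertises, specialized to the case where one party knows everything missing and can pack all $k-1$ missing pieces into a single broadcast.
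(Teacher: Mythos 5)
Your proposal is correct and is essentially identical to the paper's own proof: both have $P_k$ broadcast the $n$-bit xor $\bigoplus_{i=1}^{k-1} x_{i,i}$, each $P_i$ ($i<k$) strips off the $k-2$ summands it sees to recover $x_{i,i}$, and then outputs $f$ on instance $i$ in one bit. The visibility bookkeeping you flag is exactly the point the paper relies on, and it holds as you state.
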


\begin{proof}
The idea is to use, what we term, ``multiplexing''. We present a protocol $Q$ for the problem: $P_{k}$ computes $M=\oplus_{i=1}^{k-1} (x_{i,i})$. That is, it xores the input of $P_1$ for instance $1$ with the the input of $P_2$ for instance $2$, and so on for each of the $k-1$ instances. Party $P_{k}$ writes this value on the common board. Now, for $1 \leq i <k$, party $P_{i}$ knows all the values used by $P_k$ to compute $M$, except for $x_{i,i}$ so, by xoring $M$ with those $k-2$ values, $P_i$ can determine $x_{i,i}$. This, in turn, means that now $P_i$ knows all the inputs for instance $i$, and so it locally computes the output of $f$ on the $i$-th instance and writes this value on the board. As $f$'s value on each instance, has been written by the corresponding $P_{i}$ ($1 \leq i <k$), then all parties now know the correct output for each of the $k-1$ instances.
\qed
\end{proof}

If the number of instances, $\ell$, is different than $k-1$, we simply partition the instances into ``blocks'' of size $k-1$. Hence,

\begin{corollary}
\label{corol-nof-ds}
Let $f_{k}: (\nbit)^k \rightarrow \bit$ be an arbitrary $k$-party boolean function and $(k - 1)$ divides $\ell$, then $D^{\NOF}(f_{k}^{\ell})\leq \ell \cdot \frac{n}{k - 1} + \ell$.
\end{corollary}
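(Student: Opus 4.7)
The plan is to reduce the statement to Lemma~\ref{lem-nof-ds} by a straightforward block decomposition. Since $(k-1)$ divides $\ell$, partition the $\ell$ instances into $t = \ell/(k-1)$ disjoint blocks $B_1, \ldots, B_t$, each consisting of exactly $k-1$ instances of $f_k$.

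Next, I would apply Lemma~\ref{lem-nof-ds} separately to each block $B_s$. For each such block, the lemma yields a NOF protocol that computes the $k-1$ outputs of $f_k$ on the instances in $B_s$ using at most $n + (k-1)$ bits of communication. Concatenating these $t$ protocols one after another, in the natural way on the shared board, gives a NOF protocol for $f_k^{\ell}$ whose total cost is at most
\[
t \cdot \bigl(n + (k-1)\bigr) \;=\; \frac{\ell}{k-1}\cdot n + \ell \;=\; \ell\cdot\frac{n}{k-1} + \ell,
\]
which is the desired bound.

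There is essentially no obstacle here: correctness follows because the block protocols operate on disjoint sets of instances, each party has on its forehead all the inputs required by the block protocol (the forehead assignment is the same across all instances), and the executions do not interfere with one another. The only mild subtlety worth noting in the write-up is that the division by $k-1$ is exact by assumption, so no rounding terms appear; if one wished to remove the divisibility hypothesis, a single partial block of size at most $k-2$ could be handled by an additional $n$ bits (each remaining instance solved by having some party broadcast the missing input), but this is not needed for the stated corollary.
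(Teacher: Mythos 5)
Your proof is correct and follows exactly the paper's intended argument: partition the $\ell$ instances into $\ell/(k-1)$ blocks of size $k-1$, apply Lemma~\ref{lem-nof-ds} to each block, and sum the costs. Nothing further is needed.
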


\medskip
In~\cite{G93}, it is proved that most $k$-argument functions $f: (\nbit)^{k} \rightarrow \bit$, have communication complexity $D^{\NOF}(f) \geq n - \log (k + 2)$, in the $\NOF$ model (see also~\cite[Exercise 6.5]{KN}). Using Corollary~\ref{corol-nof-ds} with such a function, gives a provable savings for the direct-sum problem in the deterministic $\NOF$ model.
Since most functions are also hard with respect to other complexity measures in the NOF model (e.g., randomized complexity, as showen in ~\cite{JF06}), Corollary~\ref{corol-nof-ds} implies a similar result for those as well.

\medskip
The above example deals with savings for ``hard'' functions. Next, we show how multiplexing can (slightly) help in cases where non-trivial protocols exist for a single instance.
Specifically, we consider the boolean function ``equality'', which is defined as $EQ_k(x_1, \ldots, x_k) = 1$ iff $x_1 = \ldots = x_k$.

\begin{proposition}
\label{prop_eq}
$D^{\NOF}(EQ_k) = 2$.
\end{proposition}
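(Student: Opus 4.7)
The plan is to prove the two bounds $D^{\NOF}(EQ_k)\le 2$ and $D^{\NOF}(EQ_k)\ge 2$ separately; implicitly we assume $k\ge 3$, since for $k=2$ the NOF model coincides with the standard two-party model and is known to require $\Omega(n)$ bits for equality.

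For the upper bound, I would exploit the large information overlap in the NOF model: each party sees $k-1$ of the $k$ inputs. First, $P_1$ writes a single bit $b_1$ which is $1$ iff $x_2=x_3=\cdots=x_k$ (this is a condition $P_1$ can evaluate from its view). If $b_1=0$, then clearly $EQ_k(x_1,\ldots,x_k)=0$ and all parties can output $0$. If $b_1=1$, it remains only to decide whether $x_1$ equals the common value $v=x_2=\cdots=x_k$. Now $P_2$, whose view contains $x_1$ and $x_3=v$, can compute this and write $b_2=1$ iff $x_1=x_3$; since every other party $P_j$ (for $j\ne 2$) already sees both $x_1$ and $x_3$, they can independently verify this too, so correctness and common knowledge of the output are immediate. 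In total $2$ bits are written.

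For the matching lower bound, I would argue that no $1$-bit protocol can compute $EQ_k$. In any deterministic NOF protocol of cost $1$, the identity of the single speaking party $P_i$ is determined by the protocol tree (independent of the inputs, since no bits have yet been exchanged). After the protocol, $P_i$ itself must know the output, but $P_i$'s view consists only of $(x_j)_{j\ne i}$, and the single bit it wrote is a deterministic function of this same view. Hence the output would have to be a function of $(x_j)_{j\ne i}$ alone. However, fixing $x_j=v$ for all $j\ne i$, the value $EQ_k(x_1,\ldots,x_k)$ equals $1$ iff $x_i=v$, so it genuinely depends on $x_i$. This contradiction shows $D^{\NOF}(EQ_k)\ge 2$, completing the proof.

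The upper bound is the substantive part but is a straightforward cascade of two local checks; the only subtlety in the lower bound is correctly observing that in a one-bit protocol the speaker is fixed, so its own view cannot be enlarged by the single bit it itself writes. I do not anticipate any significant obstacle beyond articulating this clearly.
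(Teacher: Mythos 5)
Your proof is correct and follows essentially the same route as the paper: a two-bit protocol in which one party certifies the equalities it can see and a second party completes the remaining comparison, plus a lower bound observing that the (input-independent) first speaker cannot determine $EQ_k$ from its own view alone. The only cosmetic differences are which parties play which roles and that your protocol is adaptive, neither of which changes the argument.
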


\begin{proof}
We present a protocol for the problem. Party $P_{k-1}$ writes on the board $1$ if $x_{k-2} = x_k$, and $0$ otherwise. If $P_{k}$ sees $1$ on the board and if all the inputs it sees are equal, then $P_{k}$ writes as output ``$1$'', otherwise it writes ``$0$''. Obviously, the protocol is correct, so $D^{\NOF}(EQ_k) \leq 2$.
For the lower bound, the first party $P_i$ to write a bit on the board (the index of this party is fixed by the protocol), must sometimes also read something from the board to know the value of $EQ_k$ (in case that all other inputs $x_j$ are equal), so that requires at least two bits of communication.
\qed
\end{proof}

\begin{proposition}
\label{prop_eq_multiple}
For odd $k \geq 3$, we have $D^{\NOF}(EQ_{k}^{\frac{k - 1}{2}})\leq 1 + \frac{k - 1}{2}$.
\end{proposition}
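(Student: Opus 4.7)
The plan is to generalize the two-bit protocol from Proposition~\ref{prop_eq} by multiplexing, across the $(k-1)/2$ instances, the ``comparison bit'' that each single-instance execution needs, so that only one combined XOR bit plus one answer-bit per instance is written.

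First, I would set up the pairing. Since $k$ is odd, $k-1$ is even, so parties $P_1,\ldots,P_{k-1}$ split into $(k-1)/2$ disjoint pairs $(P_1,P_2),(P_3,P_4),\ldots,(P_{k-2},P_{k-1})$, and party $P_k$ is left over. For each instance $i\in\{1,\ldots,(k-1)/2\}$ I associate the pair $(P_{2i-1},P_{2i})$ and define the comparison bit
\[
b_i \;=\; \bigl[\,x_{i,\,2i-1}=x_{i,\,2i}\,\bigr].
\]
Observe that $b_i$ depends only on the two inputs on the foreheads of $P_{2i-1}$ and $P_{2i}$ in instance $i$, so every party other than $P_{2i-1}$ and $P_{2i}$ can compute it. In particular, $P_k$ (which is in none of the pairs) can compute \emph{all} the $b_i$'s, and for any fixed $i$, the party $P_{2i-1}$ can compute $b_j$ for every $j\neq i$ (because $2j-1$ and $2j$ are both different from $2i-1$: the former by $j\neq i$, the latter by parity).

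The protocol then has two rounds. In Round~1, party $P_k$ writes the single bit $B=\bigoplus_{i=1}^{(k-1)/2} b_i$ on the board. In Round~2, for each $i$, party $P_{2i-1}$ recovers $b_i = B \oplus \bigoplus_{j\neq i} b_j$ using its knowledge of the other comparison bits and the posted $B$. Since $P_{2i-1}$ sees all $k-1$ inputs in instance $i$ except $x_{i,\,2i-1}$, it can check whether those $k-1$ values are all equal; if so, $EQ_k$ on instance $i$ equals $b_i$ (because the missing input $x_{i,\,2i-1}$ must equal the common value iff $x_{i,\,2i-1}=x_{i,\,2i}$), and otherwise $EQ_k=0$. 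Party $P_{2i-1}$ writes this answer bit. Summing the communication gives $1+(k-1)/2$ bits, as desired.

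The main thing to watch out for is the pairing being consistent with the NOF visibility constraints: I need each $P_{2i-1}$ to see \emph{both} inputs determining every $b_j$ for $j\neq i$, and I need a single party (here $P_k$) that is outside every pair and can thus compute the whole XOR $B$. Both requirements are exactly what odd $k$ and the ``leftover'' $P_k$ guarantee, which is why the statement is restricted to odd $k$; once the pairing is fixed, the correctness argument reduces to the single-instance analysis of Proposition~\ref{prop_eq} carried out $(k-1)/2$ times in parallel.
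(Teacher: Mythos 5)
Your proposal is correct and is essentially the paper's own protocol: one comparison bit per instance, each tied to a disjoint pair of parties, XORed into a single message by the unpaired party $P_k$, after which one member of each pair de-multiplexes its bit and announces the answer for its instance. The only (immaterial) difference is that you let $P_{2i-1}$ announce the output where the paper uses the other member of the pair, and your indexing of the comparison bits by instance is actually cleaner than the paper's.
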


\begin{proof}
We give a protocol for $EQ_{k}^{\frac{k - 1}{2}}$: For odd $j$'s, define
$$b_{j} = \left\{
\begin{array}{ll}
		1 & \mbox{if } x_{j, j} = x_{j, j + 1} \\
		0 & \mbox{otherwise }
\end{array}
\right. $$

\begin{itemize}
\item $P_{k}$ computes the xor of all the bits $b_{j}$, for $j=\{1, 3, \ldots, k - 2\}$, which we denote as $M$, and writes in on the board.
\item For $j \in [1,\frac{k - 1}{2}]$, party $P_{2 \cdot j}$ knows all the values used to compute $M$ except $b_{2 \cdot j - 1}$, so it can now compute it. Then, if all the inputs in instance $2 \cdot j$ that $P_{2\cdot j}$ sees are the same and $b_{2 \cdot j - 1} = 1$, then it writes $1$ on the board as the output for this instance; otherwise, it writes $0$.
\end{itemize}
The communication is $1 + \frac{k - 1}{2}$ bits (i.e., better than $\frac{k - 1}{2} \cdot D^{\NOF}(EQ_k) = k - 1$.)
\qed
\end{proof}

\medskip

\section{Multiplexing NOF protocols via a Restricted NOF Model}

The above examples already show the power of multiplexing.
In this section, we will describe how to multiplex the executions of a single-instance NOF protocol $Q$ on $\ell$ instances, assuming that $Q$ meets certain conditions that would allow it to run in a ``restricted NOF'' model. Before formally defining the restricted NOF model and how it is used, we
describe, informally, the basic ideas.

As a starting point, consider a $3$-party $\NOF$ protocol $Q$, for a boolean function $f \in (\nbit)^3 \rightarrow \bit$. Assume that the first message of $Q$ is written on the board by $P_1$, and that it is of the form $m_2(x_2) || m_3(x_3)$, where $||$ stands for concatenation. Assume also that $m_2, m_3$ are of the same length (for all inputs).
That is, $P_1$'s message has two equal-length parts, one that depends on input $x_2$ and one that depends on input $x_3$. Obviously, $P_2$ can compute by itself the message $m_3(x_3)$, as it sees $x_3$, and, similarly, $P_3$ can compute $m_2(x_2)$. Therefore, in this specific case, instead of writing both these messages, $P_1$ could have simply written on the board $m_2(x_2) \oplus m_3(x_3)$ (which is of half the size), and both $P_2$ and $P_3$ could have determined the original message they were supposed to get.
This can be extended beyond the first message of the protocol: namely, when $m_2$ depends on $x_2$ and on the communication so far, and $m_3$ depends on $x_3$ and on the communication so far. Since both $P_2$ and $P_3$ see the board themselves, they can still compute the values $m_3,m_2$, respectively.

Next, we wish to extend the above simple example to more parties; specifically, consider the case $k=4$.
Suppose that $P_1$ would like to write on the board the expression $m_2(x_2, x_4) || m_3(x_3, x_4)$. In this case, $P_2$ and $P_3$ will still be able to determine $m_2,m_3$ even if $P_1$ will write on the board the value $m_2\oplus m_3$. However, $P_4$ (who does not know $x_4$) cannot compute $m_2, m_3$ from $m$. In some cases, however, $P_4$ itself does not need the values $m_2.m_3$ to continue its computation.
To model what information is needed by each party, it is convenient to view the $\NOF$ protocol $Q$ as a protocol $Q'$ in a model where communication is over point-to-point channels (where a party that does not need a message, simply does not receive it).
Say that in $Q'$ party $P_1$ sends a message $m_2(x_2, x_4)$ to $P_2$ and a message $m_3(x_3, x_4)$ to $P_3$. We then can get a new $\NOF$ protocol $Q''$ by multiplexing this message, and so we save the length of a single message. In $Q''$, party $P_1$ will write on the board  $M = m_2(x_2, x_4) \oplus m_3(x_3, x_4)$. Now, $P_2$ and $P_3$ will be able to determine their original messages, as explained before, and since we assumed that in $Q'$, party $P_4$ did not receive a message at this stage, its future messages in $Q'$ do not depend on messages $m_2, m_3$ and so $P_4$ can simply ignore the value $M$ on the board in $Q''$.

Now that we have moved to the point-to-point model, we need to reconsider what if the messages we wanted to multiplex was not the first message to be sent. For instance, suppose we take a protocol which begins with $P_2$ and $P_3$ sending a message, each, to $P_1$. Then, $P_1$ sends a message $m_2(x_2, H)$ to $P_2$ and a message $m_3(x_3, H)$ to $P_3$, where $H$ stands for the history of messages that $P_1$ received. Now, if $P_1$ will multiplex these messages, then $P_2$ may not be able to compute $m_3(x_3, H)$ as it may have only partial information on $H$ (since we are dealing now with a point-to-point model, $P_2$ did not see the message sent from $P_3$ to $P_1$). However, when having, say, two instances for $f$, we can sometimes still do something. Below, we formalize constrains for when two (equal length) messages $m_1$ in protocol $Q_1$ and $m_2$ in $Q_2$ can be multiplexed. Obviously, the sender of these two messages should be the same, say $P_i$, and the recipients need to be different, say $P_{j_1}$ and $P_{j_2}$. Denote these messages by $m_{1, i \rightarrow j_1}$ and $m_{2, i \rightarrow j_2}$. In order for $P_{j_1}$ to be able to compute the message $m_{2, i \rightarrow j_2}$ (in order to ``de-multiplex'' the message on the board), it is enough that~(1) this message does not depend on $x_{j_1}$, and~(2) that $P_{j_1}$ knows the history of messages received by $P_i$ so far (similar requirements needed for $P_{j_2}$). To fulfil the second requirement, we require that none of the messages sent to $P_i$ in $Q_1$ and $Q_2$ will be multiplexed (so in the resulting $\NOF$ protocol, those messages will simply appear on the board and $P_{j_1}$ will know them).
This discussion can be simply extended to deal with $\ell$ protocols for which we want to multiplex messages from $P_i$ to $P_{j_{1}}, \ldots, P_{j_{\ell}}$, respectively.

To force protocols to satisfy the above requirements, we define below a new model, where instead of having requirements on messages sent from parties, we simply limit the parties' view of other parties' inputs.
This is defined using a directed graph $G$, that describes what inputs are seen by each party. Still, communication is allowed between any pair of parties over a private channel.

\begin{definition}
Let $G = (V, E)$ be a simple directed graph over $V = \{1, \ldots , k \}$. A restricted $\NOF_{G}$ multi-party model with $k$ parties has the restriction that, for all $i,j\in[k]$, party $P_i$ sees the input of party $P_j$ iff $(i,j) \in E$. The graph $G$ will be referred to as the restriction graph. Communication in the $\NOF_{G}$ model is over a complete network of point-to-point channels. Namely, in a $\NOF_{G}$ protocol $Q$, in each round $t$, each party $P_i$ can send a message to each party $P_j$ depending on messages $P_i$ received in the previous $t - 1$ rounds and depending on the inputs $P_i$ sees. (We emphasize that we allow $P_i$ sending a message to $P_j$ even though $(i, j) \notin E$). We say that protocol $Q$, in the $\NOF_{G}$ model, computes $f \in ({\nbit})^k \rightarrow \bit$, if some party $P_i$, whose identity is known to all parties, outputs $f(\vec{x})$.\footnote{This is in contrast to the $\NOF$ model, where all the parties output $f(\vec{x})$.}
\end{definition}

We will use the following notations. For a node $i \in V$ denote the neighbors of $i$, by $N(i) = \{j | (i, j) \in E \}$.
For an input $\vec{x}$, denote $\VIEW_{i, G}(\vec{x})$ as the inputs that $P_i$ sees, that is, the sequence $\{x_j | j \in N(i)\}$. Denote $m_{j \rightarrow i,t,\vec{x}}$ as the message sent from $P_j$ to $P_i$ in round $t$ on input $\vec{x}$. Let $M_{i, t}(\vec{x}) = \{(t', j, m_{j \rightarrow i, t',\vec{x}}) | t' \leq t, j \in [k]\}$, be the history of messages that $P_i$ received in rounds $1$ to $t$ on input $\vec{x}$. The message sent from $P_i$ to $P_j$ in round $t+1$ in protocol $Q$ may depend on the messages $P_i$ received in previous rounds and on the inputs seen by $P_i$ (as set by $G$). That is, it can be denoted as $m^{Q}_{i \rightarrow j, t + 1,\vec{x}}(M_{i, t}(\vec{x}), \VIEW_{i, G}(\vec{x}))$. When the input $\vec{x}$ is clear from the context, we will usually omit it from the notations.

\begin{definition}
An oblivious $k$-party $\NOF_{G}$ protocol $Q$ is a protocol in which, for every input vector $\vec{x}$, the length of the messages exchanged is fixed. That is, there exist a mapping $\LEN^{Q}:\mathbb{N} \times [k] \times [k] \rightarrow \mathbb{N}$, referred to as the {\em communication pattern of $Q$}, that maps $(t, i, j)$ to the length of the message sent from $P_i$ to $P_j$ in round $t$, on every $\vec{x}$.
For an oblivious protocol $Q$, we denote the number of bits sent from $P_a$ to $P_b$ in protocol $Q$ by $W_Q(a, b)$.
\end{definition}

The following definitions will capture sufficient requirements from restricted $\NOF$ protocols, so that it is possible to multiplex messages when combining the protocols together.

\begin{definition}
\label{def:good}
Let $G = (V, E)$ be a restriction graph for $k$ parties. Denote the isomorphic to $G$ graph, which is the resulted graph of permutating $G$'s nodes with a permutation $\pi$, by $G_{\pi}$. Let $\overline{N_{G}(a)}$ be the non-neighbours of $a$ in the graph $G$, that is, $V\backslash N_{G}(a)$. Let $\MYPROD = \pi_1, \ldots, \pi_\ell$ be a sequence of $\ell$ permutations over $[k]$. For $b \in [k], R \subseteq [\ell]$, denote $\MAP_{\MYPROD, b, R} = \left\{\pi_{r}(b) : r \in R \right\}$. That is, the images of $b$ under the permutations in $R$.  A triplet $(a, b, R)$, for $a \neq b \in [k]$ and $R \subseteq [\ell]$, is $\MYPRODG$-good if:
\begin{enumerate}
\item $\MAP_{\MYPROD, a, R}=\{a\}$.
\item $\MAP_{\MYPROD, b, R}$ is a subset of $\overline{N(a)}$ of size $|R|$, that does not include $a$ or $b$. \footnote{Condition~(1) already implies that $a \notin \MAP_{\MYPROD, b, R}$, as $a$ is already mapped to $a$ by all the permutations in $R$. This requirement is included in condition~(2) for convenience.}
\item For all $r \in R$ we have $(\MAP_{\MYPROD, b, R} \bigcup \{b\} \backslash \{\pi_r(b)\}) \subseteq \overline{N_{G_{\pi_r}}(a)}$.
\end{enumerate}
\end{definition}

Let us discuss the above definition. Suppose we have a $\NOF_G$ protocol $Q$ that we want to multiplex. Intuitively, each triplet $(a, b, R)$ is a candidate for multiplexing.
The messages that will be multiplexed, are the message that $P_a$ sends to $P_b$ in $Q$, and the messages $P_{\pi_r(a)}$ sends to $P_{\pi_r(b)}$, for each permutation $\pi_r$, for $r \in R$. Condition~(1) guarantees that $a$ is mapped to itself in all those permutations, so it is the only sender of the messages we want to multiplex together, as required for multiplexing.
Condition~(2) requires that $b \notin \MAP_{\MYPROD, b, R}$ and $\MAP_{\MYPROD, b, R}$ being the size of $|R|$,  which guarantees that we will have $|R| + 1$ messages to multiplex (messages which are sent from $P_a$ to $P_b$, or to other ``alternative'' recipients, as determined by $R$).
As stated before, for multiplexing to work, any multiplexed message sent to a certain recipient, should not depend on any of the other recipients' input. Conditions~(2) (i.e., that $\MAP_{\MYPROD, b, R}$ is a subset of $\overline{N(a)}$) and condition~(3) take care of that.

\begin{definition}
\label{def:multiplexing}
Let $G, k$, and $\MYPROD$ be as in Definition~\ref{def:good}. A $\MYPRODG$-multiplexing set $S$ is a collection of $\MYPRODG$-good triplets, such that for any two distinct $(a, b, R), (a', b', R') \in S$ we have $a \notin  (\{b'\} \bigcup \MAP_{\MYPROD, b', R'})$ and $(\{b\} \bigcup \MAP_{\MYPROD, b, R}) \bigcap (\{b'\} \bigcup \MAP_{\MYPROD, b', R'}) = \emptyset$.
\end{definition}

Intuitively, a $\MYPROD$-multiplexing set $S$ guarantees that if we multiplex any messages sent from $a$, we will not multiplex messages sent {\em to} $a$. Also, we will not have a recipient or an ``alternative'' recipient, for the same $a$, more than once.

The next definition formalizes the ``communication pattern'' of an oblivious protocol. Having the same ``communication pattern'' between protocols is necessary for our multiplexing stage in the following proofs.
\begin{definition}
Let $Q$ be an oblivious $k$-party $\NOF_{G}$ protocol (for some restriction graph $G = (V, E)$), which computes a $k$-argument boolean function $f$. For a communication pattern $\LEN$, we denote $\LEN_{\pi}$ as the mapping $\LEN_{\pi}(t, i, j) = \LEN(t, \pi^{-1}(i), \pi^{-1}(j))$. Protocol $Q$ is $\pi$-pattern-robust, if there is an oblivious protocol $Q'$ for computing $f$ in $\NOF_{G_{\pi}}$, such that $\LEN^{Q'} = \LEN^{Q}_{\pi}$. That is, if there is a $\NOF_{G_{\pi}}$ protocol $Q'$ for computing $f$, such that the communication pattern is the same for the two protocols, up to the permutation $\pi$.
\end{definition}

\begin{theorem}
\label{thm:second}
Let $f$ be a $k$-argument boolean function, $G = (V, E)$ a restriction graph for $k$ parties, $Q$ an oblivious $k$-party $\NOF_G$ protocol for $f$ and $\MYPROD = \pi_1, \ldots \pi_\ell$ be a sequence of permutations over $[k]$ such that $\pi_1$ is the identity permutation. Also, assume that, for $2 \leq i \leq \ell$, protocol $Q$ is $\pi_{i}$-pattern-robust, and $S$ is a $\MYPRODG$-multiplexing set. Then, $$D^{\NOF}(f^{\ell}) \leq \ell \cdot D^{\NOF_G}(Q) + \ell - \sum_{(a, b, R) \in S} |R| \cdot W_Q(a, b).$$
\end{theorem}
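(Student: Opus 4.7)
The plan is to build a standard $\NOF$ protocol $\tilde Q$ for $f^{\ell}$ by running the $\ell$ protocols $Q_1,\ldots,Q_\ell$ in parallel on the board, where $Q_1 = Q$ (since $\pi_1$ is the identity) and, for $r\geq 2$, $Q_r$ is the $\pi_r$-pattern-robust $\NOF_{G_{\pi_r}}$-protocol for $f$ with $\LEN^{Q_r}=\LEN^{Q}_{\pi_r}$ supplied by the hypothesis. Party $P_j$ will play role $P_j$ in every $Q_r$; this is consistent because in instance $r$ party $P_j$ has exactly $x_{r,j}$ on its forehead and, since $G$ has no self-loops, $P_j$ sees every input that role $P_j$ needs in $G_{\pi_r}$. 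As a baseline, simply writing every point-to-point message of every $Q_r$ verbatim on the board would cost $\ell\cdot D^{\NOF_G}(Q)$ bits, plus an extra $\ell$ bits for the designated-output party of each $Q_r$ to publish $f(\vec{x}_r)$. The savings come from multiplexing: for every triplet $(a,b,R)\in S$ and every round $t$, Condition~(1) of Definition~\ref{def:good} gives $\pi_r(a)=a$ for every $r\in R$, so party $P_a$ is the sender in $Q_1$ (to $P_b$) and in every $Q_r$, $r\in R$ (to $P_{\pi_r(b)}$); pattern-robustness guarantees that all these round-$t$ messages have the same length $\ell_t:=\LEN^{Q}(t,a,b)$. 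In $\tilde Q$, party $P_a$ then writes only the XOR of these $|R|+1$ messages, a single block of $\ell_t$ bits, on the board.

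The main technical step is verifying that each intended recipient can de-multiplex the combined block. Fix $(a,b,R)\in S$ and an intended recipient $P_{\pi_s(b)}$ for some $s\in\{1\}\cup R$ (where $\pi_1(b)=b$). To extract its own message, $P_{\pi_s(b)}$ must compute each other summand $m_{t,s'}$ for $s'\in(\{1\}\cup R)\setminus\{s\}$. Each $m_{t,s'}$ is a function of (i) $P_a$'s view in $Q_{s'}$, i.e.\ $\{x_{s',j}:j\in N_{G_{\pi_{s'}}}(a)\}$, and (ii) the messages received by $P_a$ in $Q_{s'}$ up to round $t$. For (i), $P_{\pi_s(b)}$ sees every input in instance $s'$ except $x_{s',\pi_s(b)}$, so it suffices that $\pi_s(b)\in\overline{N_{G_{\pi_{s'}}}(a)}$; for $s'=1$ this is Condition~(2) using $G_{\pi_1}=G$, and for $s'\in R$ it is exactly Condition~(3) applied with $r=s'$, whose set $(\MAP_{\MYPROD,b,R}\cup\{b\})\setminus\{\pi_{s'}(b)\}$ contains $\pi_s(b)$ for every $s\in(\{1\}\cup R)\setminus\{s'\}$. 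For (ii), the disjointness clause of Definition~\ref{def:multiplexing} says $a\notin\{b'\}\cup\MAP_{\MYPROD,b',R'}$ for every other $(a',b',R')\in S$; hence in no instance does $P_a$ receive a multiplexed message, so every message sent to $P_a$ in any $Q_{s'}$ appears on the board verbatim and can be read by $P_{\pi_s(b)}$. Combined with the public deterministic protocol $Q_{s'}$, this lets $P_{\pi_s(b)}$ simulate $P_a$'s algorithm and reconstruct $m_{t,s'}$. The de-multiplexing proceeds by induction on the round number, so that earlier multiplexed blocks have already been resolved by the time later ones are produced.

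Counting is then straightforward: the baseline costs $\ell\cdot D^{\NOF_G}(Q)+\ell$ bits (each $Q_r$ is an index-permutation of $Q$ and has the same total length, plus one output bit per instance). For each $(a,b,R)\in S$ the multiplexing replaces $|R|+1$ same-length messages by a single XOR of that length, saving $|R|\cdot W_Q(a,b)$ bits. The second disjointness clause of Definition~\ref{def:multiplexing}, namely that the recipient sets $\{b\}\cup\MAP_{\MYPROD,b,R}$ are pairwise disjoint across triplets of $S$, guarantees that different triplets save disjoint portions of the communication and that no single message is targeted for multiplexing twice. Adding the savings and subtracting yields the stated bound. The main obstacle throughout is the correctness verification in the previous paragraph; the rest is careful bookkeeping guided by the axioms of Definitions~\ref{def:good} and~\ref{def:multiplexing}.
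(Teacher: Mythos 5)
Your proposal is correct and follows essentially the same route as the paper's proof: run the $\ell$ pattern-robust protocols in parallel on the board, XOR the $|R|+1$ equal-length messages for each triplet of the multiplexing set, and justify de-multiplexing via conditions (2)--(3) of the goodness definition (for input-independence) together with the disjointness clause of the multiplexing set (so that messages received by each sender $P_a$ appear unmultiplexed on the board). Your uniform treatment of all recipients $P_{\pi_s(b)}$ for $s\in\{1\}\cup R$, and the explicit induction on rounds, are slightly cleaner presentations of what the paper does in two separate cases.
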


Intuitively, the $\NOF$ communication complexity of $f^{\ell}$ is bounded by $\ell$ times the $\NOF_{G}$ communication complexity of $f$, plus $\ell$ bits to distribute the outputs, minus the savings, which  depends on the multiplexing set. For any $(a, b, R)$ in the multiplexing set, we save $|R|$ times the communication between $a$ and $b$ in the protocol $Q$.

\medskip

\begin{proof}
Protocol $Q$ is $\pi_i$-pattern-robust, for $i = 2, \ldots, \ell$. Therefore, there are protocols $Q_{\pi_i}$ such that $Q_{\pi_i}$ solves $f$ in $\NOF_{G_{\pi_i}}$ and $\LEN^{Q} = \LEN^{Q_{\pi_i}}$. We take those $\ell$ protocols and combine them into a single $\NOF$ protocol $Q'$ for computing $f^\ell$, where the $i^{th}$ protocol is responsible for computing $f$ on the $i^{th}$ instance. Whichever messages are sent in $Q_{\pi_1}, \ldots, Q_{\pi_\ell}$ over private channels are now communicated on the board, so clearly the computation of each message remains the same (as the messages previously received by a party, are now known to all). According to this definition of $Q'$, for each $(a_i, b_i, R_i) \in S$, the bits initially sent in round $t$ of protocol $Q$ from $P_{a_i}$ to $P_{b_i}$, are now translated into $|R_i| + 1$ messages. That is, one message from $P_{a_i}$ to $P_{b_i}$ in $Q$ itself, and one message from $P_{a_i}$ to $P_{\pi_{r}(b_i)}$ in $Q_{\pi_r}$, for $r \in R_i$. Instead of sending all these messages in $Q'$, we perform the following multiplexing  step for every round $t$: Party $P_{a_i}$ writes on the board
$$m^{Q'}_{(a_i, b_i, R_i),t} \triangleq m^{Q}_{a_i \rightarrow b_i,t} \oplus \bigoplus_{r \in R_i}(m^{Q_{\pi_r}}_{a_i \rightarrow \pi_{r}(b_i),t}).$$
We need to prove that (i) for each $r \in R_i$, party $P_{\pi_{r}(b_i)}$ can infer the original message sent to it in $Q_{\pi_r}$, and that (ii) $P_{b_i}$ itself can retrieve the message originally sent to it in $Q$.
We start with (i). To prove that $P_{\pi_{r}(b_i)}$ can compute the message $m^{Q'}_{a_i \rightarrow \pi_{r}(b_i),t}$, we need to show that:
\begin{enumerate}
\item All the xored messages (beside $m^{Q_{\pi_r}}_{a_i \rightarrow \pi_{r}(b_i),t}$) do not depend on the input of $P_{\pi_{r}(b_i)}$ (note that since $Q'$ is in the regular $\NOF$ model, $P_{\pi_{r}(b_i)}$ knows all the inputs but its own).
\item $P_{\pi_{r}(b_i)}$ knows the history of messages received by $P_{a_i}$ in all those protocols (as the xored messages may also depend on them).
\end{enumerate}
For (1), let $r \neq r' \in R$.  We show that $P_{\pi_{r}(b_i)}$ can compute $m^{Q_{\pi_{r'}}}_{a_i \rightarrow \pi_{r'}(b_i),t}$. This follows from the fact that $\pi_r(b_i)$ belongs to $\MAP_{\MYPROD, b_i, R_i} \bigcup \{b_i\} \backslash \{\pi_r'(b_i)\}$ (by definition of $\MAP_{\MYPROD, b_i, R_i}$), and $(\MAP_{\MYPROD, b_i, R_i} \bigcup \{b_i\} \backslash \{\pi_r'(b_i)\}) \subseteq \overline{N_{G_{\pi_r'}}(a_i)}$ by the $\MYPRODG$-goodness of triplet $(a_i, b_i, R_i)$) (specifically, Definition~\ref{def:good}, condition~(4)).
We show that $P_{\pi_{r}(b_i)}$ can compute $m^{Q}_{a_i \rightarrow b_i,t}$. This follows from the fact that $\MAP_{\MYPROD, b_i, R_i} \subseteq \overline{N(a_i)}$ (by Definition~\ref{def:good}, condition~(2)).
For (2), this is true, as messages which are sent to $P_{a_i}$ in the original protocols, are not multiplexed in the combined protocol (by Definition~\ref{def:good},
for $u \neq i$, we have that $a_i \notin (\{b_u\} \bigcup \MAP_{\MYPROD, b_u, R_u})$.

The proof for~(ii) is similar, but somewhat simpler. The argument for~(2) is just the same. The argument for~(1), is as follows: Let $r \neq r' \in R$.  We show that party $P_{b_i}$ can compute $m^{Q}_{a_i \rightarrow \pi_{r'}(b_i),t}$. This is true because $b_i$ belongs to $\MAP_{\MYPROD, b_i, R_i} \bigcup \{b_i\} \backslash \{\pi_{r'}(b_i)\}$ (as $b_i \neq \pi_{r'}(b_i)$, by Definition~\ref{def:good},  condition~(2)).

An additional step in $Q'$ is the reporting of the output for each of the instances, to all parties. As the protocols $Q$ and $Q_{\pi_i}$, for $i = 2, \ldots, \ell$, all ended up with one party knowing the output, and since $Q'$ is in the $\NOF$ model, then for each $i = 1, \ldots, \ell$, the party who knows the output on instance $i$ now writes the output bit on the board. It follows that $D^{\NOF}(f^{\ell}) \leq \ell \cdot D^{\NOF_G}(Q) + \ell - \sum_{(a, b, R) \in S} |R| \cdot W_Q(a,b)$.
\qed
\end{proof}

\begin{example}
We now show that Theorem~\ref{thm:second} implies Lemma~\ref{lem-nof-ds}. Let $f$ be a $k$-argument function. Let $\ell = k - 1$. Let $G = ([k], \left\{(k, 1)\right\} \bigcup \{(1, i) : 2 \leq i \leq k\})$ be a restriction graph. I.e, the only edges in $G$ are between $k$ to $1$ and between $1$ to all the other nodes. We describe a protocol $Q$ for $f$ in $\NOF_G$: $P_{k}$ sends $x_1$ to $P_1$ and $P_1$ computes $f(\vec{x})$, as it now knows all $k$ inputs. We take permutations $\pi_2 = (2, 3, \ldots, k - 1, 1, k), \pi_3 = (3, 4, \ldots, k - 1, 1, 2, k), \ldots , \pi_\ell = (k - 1, 1, \dots, k - 2, k)$.
Let $Q^i$ be the following protocol for computing $f$ in $\NOF_{G_{\pi_i}}$: party $P_k$ sends $P_{i}$ its input, and $P_{i}$ (which now knows all the inputs) computes the value $f(\vec{x})$ and outputs it. As $\LEN^{Q^i} = \LEN^{Q}_{\pi_i}$, it follows that $Q$ is $\pi_i$-pattern-robust.
Let $S$ contain one triplet $(k, 1, \{2, \ldots, k - 1 \})$. We argue that the triplet is $\MYPRODG$-good. The value
$k$ is mapped to itself in all the permutations. The value $1$ is mapped to $k - 2$ different values under the $k - 2$ permutations $\pi_2, \ldots, \pi_l$, none of which is $1$ and all of which are not neighbours of $k$ in $G$.
We have $\MAP_{\MYPROD, 1, R} = \{2, \ldots, k - 1\}$, and for all $r \in \{2, \ldots, k - 1 \}$, we have $N_{G_{\pi_r}}(k) = \{r\}$. The condition that for all $r \in \{2, \ldots, k - 1 \}$, $(\MAP_{\MYPROD, 1, R} \bigcup \{1\} \backslash \{\pi_r(1)\}) \subseteq \overline{N_{G_{\pi_r}}(k)}$, translates to requiring that for for all $r \in \{2, \ldots, k - 1 \}: [k - 1] \backslash \{r\} \subseteq  [k - 1] \backslash \{r\}$, which is true. Then, by Theorem$~\ref{thm:second}$, it follows that $D^{\NOF}(f_{k}^{\ell}) \leq \ell \cdot n + \ell - (k - 2) \cdot n = n + \ell$.
\end{example}

Employing Theorem~\ref{thm:second}, requires having $\ell$ protocols for $f$ in $\ell$ restricted $\NOF$ models. Sometimes finding such protocols may be difficult. We next turn to a specific family of functions, specifically the symmetric functions, and see how we can have a simpler theorem that can be applied to functions in this family.

\begin{definition}
A $\pi$-symmetric $k$-argument function $f \in (\nbit)^{k} \rightarrow \left\{0, 1\right\}$ is a function whose value remains the same when applying the permutation $\pi$ to its inputs. That is, for any $\vec{x} \in (\nbit)^{k}$, it holds that $f(x_1, \ldots x_k) = f(x_{\pi(1)}, \ldots, , x_{\pi(k)})$.
\end{definition}

\begin{definition}
A symmetric $k$-argument function $f \in (\nbit)^{k} \rightarrow \left\{0, 1\right\}$ is a function who that is $\pi$-symmetric, for all permutations $\pi$ over $[k]$. That is, a function which does not depend on the order of its inputs.
\end{definition}

Let $f$ be a $k$-argument function and $Q$ a $\NOF$ protocol for $f$. Imagine each party $P_i$ has its name changed to $P'_{\pi(i)}$. Now, suppose that given input $\vec{x}$, the parties simply execute $Q$, only using their new identities, instead of the old ones. For input vector $\vec{x}$, protocol $Q'$ will correctly compute $f(\pi^{-1}(\vec{x}))$.
Now, view the original $Q$, as a $\NOF_G$ protocol for a some restriction graph $G$. The same manipulation can be applied to protocol $Q$, to get a $\NOF_{G_{\pi}}$ protocol $Q'$. This is true, as $\VIEW_{i, G_{\pi}}(\vec{x}) = \VIEW_{\pi^{-1}(i), G}(\pi^{-1}(\vec{x}))$. We formalize this below.

\begin{definition}
\label{def:per_prot}
Let $f$ be a $k$-argument function, $Q$ be a $k$-party $\NOF_G$ protocol for a given graph $G = (V, E)$ and $\pi$ a permutation over $[k]$. Denote $\pi(Q)$ as a protocol based on $Q$ for a restricted $\NOF_{G_\pi}$, such that for all $i, j, t$ the message $m^{\pi(Q)}_{i \rightarrow j, t}$ for the input $\vec{x}$ is the corresponding $m^{Q}_{\pi^{-1}(i) \rightarrow \pi^{-1}(j), t}$ for the input $\pi^{-1}(\vec{x})$.
\end{definition}

To show that this protocol is well defined, we argue that the function $m^{\pi(Q)}_{t + 1, i \rightarrow j}$ is computable by $P_i$ in $\NOF_G$. For that, we need to show that $P_i$ at round $t+1$ knows $M_{\pi^{-1}(i), t}$ and $\VIEW_{\pi^{-1}(i), G}$. This can be shown by a simple induction, where the base step ($t=1$) is correct as $\VIEW_{i, G_{\pi}}(\vec{x}) = \VIEW_{\pi^{-1}(i), G}(\pi^{-1}(\vec{x}))$ and $M_{\pi^{-1}(i), 0} = \emptyset$. This implies the following simple claim (proof omitted).

\begin{claim}
\label{claim}
Let $f, k, Q, G, \pi$ be as in Definition~\ref{def:per_prot}. If $f$ is $\pi$-symmetric, then $\pi(Q)$ is a correct $\NOF_{G_\pi}$ protocol for $f$.
\end{claim}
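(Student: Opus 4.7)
The plan is to treat this as a largely routine simulation argument. The paragraph preceding the claim already argues that $\pi(Q)$ is well-defined in $\NOF_{G_\pi}$: by induction on the round $t$, party $P_i$ can compute $m^{\pi(Q)}_{i\to j,t}$, because at round $t$ it knows the analog of $M_{\pi^{-1}(i),t-1}$ and $\VIEW_{\pi^{-1}(i), G}$. I would take this for granted and focus on correctness, i.e., on showing that on input $\vec{x}$ the designated output party of $\pi(Q)$ announces $f(\vec{x})$.

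The key step is to strengthen the well-definedness induction into a full \emph{simulation invariant}: for every input $\vec{x}$ and every round $t$, the messages produced by $\pi(Q)$ on $\vec{x}$ are, channel-by-channel under the relabeling $(i,j)\mapsto(\pi^{-1}(i),\pi^{-1}(j))$, exactly the messages produced by $Q$ on input $\pi^{-1}(\vec{x})$. The base case is vacuous ($M_{i,0}=\emptyset$ on both sides), and the inductive step is immediate from the definition $m^{\pi(Q)}_{i\to j,t}(\vec{x}) = m^{Q}_{\pi^{-1}(i)\to\pi^{-1}(j),t}(\pi^{-1}(\vec{x}))$ together with the hypothesis that the two histories already agree after relabeling. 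The matching of views under the relabeling is exactly the identity $\VIEW_{i,G_\pi}(\vec{x}) = \VIEW_{\pi^{-1}(i),G}(\pi^{-1}(\vec{x}))$ that was used in the well-definedness argument.

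Consequently, if $P_{i^\star}$ is the designated output party of $Q$, then under the relabeling its counterpart in $\pi(Q)$ is $P_{\pi(i^\star)}$, and the output announced by $P_{\pi(i^\star)}$ in $\pi(Q)$ on input $\vec{x}$ coincides with the output announced by $P_{i^\star}$ in $Q$ on input $\pi^{-1}(\vec{x})$, which equals $f(\pi^{-1}(\vec{x}))$ by correctness of $Q$. Invoking $\pi$-symmetry of $f$, namely $f(\vec{y}) = f(\pi(\vec{y}))$ for all $\vec{y}$, with $\vec{y}=\pi^{-1}(\vec{x})$, yields $f(\pi^{-1}(\vec{x})) = f(\vec{x})$, and the claim follows.

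I do not anticipate any substantive obstacle: the argument is pure bookkeeping, an unrolling of the definition of $\pi(Q)$ combined with $\pi$-symmetry. The only mild care needed is to keep the direction of $\pi$ versus $\pi^{-1}$ consistent when translating between party names, input coordinates, and edge labels, which is precisely the subtlety that already had to be handled in defining $\pi(Q)$, so nothing new arises here.
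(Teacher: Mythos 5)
Your proof is correct and follows essentially the route the paper intends: the paper states the claim with "proof omitted," and the inductive well-definedness/simulation argument it sketches in the preceding paragraph is exactly the invariant you spell out, combined with $\pi$-symmetry to convert $f(\pi^{-1}(\vec{x}))$ into $f(\vec{x})$. Your version just fills in the bookkeeping the paper leaves implicit.
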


\begin{definition}
\label{def:filtering}
Let $G = (V, E)$ be a restriction graph for $k$ parties. A set of triplets $$S \subseteq \{(a, b, B) |  a \neq b \in [k], B \subseteq \overline{N}(a), a,b \notin B \}$$ is a filtering set for $G$, if for any two distinct $(a, b, B), (a', b', B') \in S$, we have that $a \notin (\{b'\} \bigcup B')$ and $(\{b\} \bigcup B) \bigcap (\{b'\} \bigcup B') = \emptyset$. For a filtering set $S$, denote $R(S) = \max_{a \in [k]} \sum_{(a, b', B') \in S} |B'|$. A filtering set for $G$ is an $\ell$-filtering set, if $R(S) \leq \ell - 1$.
\end{definition}

Let $f$ be a $k$-argument function and $G = (V, E)$ be a restriction graph for $k$ parties. We now have two, somewhat similar, definitions: an $\ell$-filtering set for $G$ and a $\MYPRODG$-multiplexing set. We discuss the difference between them.
Finding a $\MYPRODG$-multiplexing set may be difficult, as the definition requires us to have a sequence of $\ell$ permutations. Furthermore, we need to have $\ell$ oblivious $\NOF_{G_{\pi_r}}$ protocols which computes $f$, a protocol for each permutation $\pi_r$, such that they all have the same $\LEN$ (up to the permutation).
The multiplexing set, is a set of potential triplets, representing which protocols and which messages in those protocols, we can multiplex.
Instead, for the special case of symmetric functions, to get a multiplexed protocol, we only need to have one oblivious $\NOF_{G}$ protocol. The filtering set will have triplets, representing multiplexing candidates, based only on the graph $G$, and the permutations themselves will be computed from the filtering set.

\begin{theorem}
\label{thm_first}
Let $f$ be a $k$-argument symmetric function, $G = (V, E)$ a restriction graph for $k$ parties, $Q$ an oblivious $\NOF_G$ protocol for $f$ and $S$ an $\ell$-filtering set for $G$. Then, $$D^{\NOF}(f^{\ell}) \leq \ell \cdot D^{\NOF_G}(Q) + \ell - \sum_{(a, b, B) \in S} |B| \cdot W_Q(a,b).$$
\end{theorem}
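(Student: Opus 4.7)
The plan is to reduce Theorem~\ref{thm_first} to Theorem~\ref{thm:second} by manufacturing, from the filtering set $S$, a sequence $\MYPROD = \pi_1, \ldots, \pi_\ell$ of permutations over $[k]$ together with a $\MYPRODG$-multiplexing set $S'$ whose contribution to the Theorem~\ref{thm:second} bound matches what we want. Since $f$ is symmetric, Claim~\ref{claim} tells us that for every permutation $\pi$ the protocol $\pi(Q)$ of Definition~\ref{def:per_prot} is a correct $\NOF_{G_\pi}$ protocol for $f$; by construction its length function is $\LEN^Q_\pi$, so $Q$ is automatically $\pi$-pattern-robust for every $\pi$. Thus, once we exhibit valid $\MYPROD$ and $S'$, Theorem~\ref{thm:second} finishes the proof.

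Set $\pi_1$ to be the identity, and discard from $S$ any triplet with $B = \emptyset$ (these contribute nothing). For each remaining $(a, b, B) \in S$, choose an injection $\varphi_{(a,b,B)} : B \to \{2, \ldots, \ell\}$; this is possible because $|B| \leq R(S) \leq \ell - 1$. For every $r \in \{2, \ldots, \ell\}$, let $\pi_r$ be the product of all transpositions $(b, c)$ over the quadruples $(a, b, B, c)$ with $\varphi_{(a,b,B)}(c) = r$. Because the sets $\{b\} \cup B$ are pairwise disjoint across distinct triplets of $S$ (by the filtering-set axioms) and each $\varphi_{(a,b,B)}$ is an injection, these transpositions act on pairwise disjoint pairs, so each $\pi_r$ is a well-defined involution. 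Finally, set $S' = \{(a, b, \varphi_{(a,b,B)}(B)) : (a,b,B) \in S\}$, which, by construction, satisfies $\MAP_{\MYPROD, b, R_{(a,b,B)}} = B$.

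Most verifications are immediate: the pairwise conditions of Definition~\ref{def:multiplexing} on $S'$ translate directly from the filtering axioms on $S$ via $\MAP_{\MYPROD, b, R_{(a,b,B)}} = B$; goodness condition~(1) holds because every transposition appearing in $\pi_r$ for $r \in R_{(a,b,B)}$ either acts on $\{b, c\}$ with $c \in B$ or on $\{b', c'\}$ coming from a distinct triplet $(a', b', B')$, and in both cases $a$ is untouched ($a \neq b$, $a \notin B$, and $a \notin \{b'\} \cup B'$); goodness condition~(2) is just a restatement of the filtering definition. The heart of the argument is condition~(3). Fix $r \in R_{(a,b,B)}$ and write $c = \pi_r(b) \in B$. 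Since $\pi_r(a) = a$, the condition reduces (using $\pi_r^{-1} = \pi_r$) to showing that $\pi_r((B \cup \{b\}) \setminus \{c\}) \cap N_G(a) = \emptyset$. Now $\pi_r(b) = c$, and every element of $B \setminus \{c\}$ is fixed by $\pi_r$: filtering disjointness keeps such an element out of $\{b'\} \cup B'$ for every distinct triplet, and within the triplet $(a, b, B)$ itself only the swap $(b, c)$ is active at step $r$. Hence $\pi_r((B \cup \{b\}) \setminus \{c\}) = B$, and the filtering axiom $B \subseteq \overline{N}(a)$ supplies the required disjointness from $N_G(a)$. Applying Theorem~\ref{thm:second} now yields the claimed bound, since $|R_{(a,b,B)}| = |B|$ makes the savings term sum to $\sum_{(a,b,B) \in S} |B| \cdot W_Q(a,b)$. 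The main obstacle is precisely this condition~(3) computation---guaranteeing that transpositions built from \emph{other} triplets cannot accidentally move an element of $B$---but the filtering-set axioms are tailored exactly for this.
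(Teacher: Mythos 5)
Your proposal is correct and follows essentially the same route as the paper: reduce to Theorem~\ref{thm:second} by using Claim~\ref{claim} to get $\pi$-pattern-robustness for free from symmetry, build $\ell$ permutations that fix each sender $a$ and send each $b$ onto its set $B$, and convert the filtering set into a $\MYPRODG$-multiplexing set with $|R|=|B|$. The only (cosmetic) difference is that you realize the permutations as products of disjoint transpositions $(b,c)$ — which makes the condition~(3) check $\pi_r\bigl((B\cup\{b\})\setminus\{c\}\bigr)=B$ immediate — whereas the paper fills in an $\ell\times k$ matrix of partial permutations (forcing each row to permute $\{b\}\cup B$) and extends arbitrarily; both verifications are the same in substance.
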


\label{app-proof}

\begin{proof}

We start by presenting an overview of the proof. Our aim is to form a sequence of $\ell$ permutations $\pi_{r}$ over $[k]$, written as an $\ell \times k$ matrix, where each row is a permutation. We then take those permutations and find for each permutation $\pi_r$, a $\NOF_{G_{\pi_r}}$ protocol $Q_{\pi_r}$ which is $\pi_r$-pattern-robust with respect to $Q$, and which solves $f$. We then convert the given $\ell$-filtering set $S$ into a respective $\MYPRODG$-multiplexing set $S^{*}$, for the aforementioned permutations and apply Theorem~\ref{thm:second} to conclude the proof.

We construct an $\ell \times k$ matrix, $A_{G, \ell, S}$ (or $A$ in short), such that the first row of $A$ is the identity permutation over $[k]$. Let $S = \{(a_1, b_1, B_1), \ldots, (a_{|S|}, b_{|S|}, B_{|S|})\}$ be the given $\ell$-filtering set.
Each triplet $(a_i, b_i, B_i)$, will partially fill entries in $A$, so that in the corresponding permutations, $b_i$ is mapped to elements of $B_i$. Specifically, denote $last_i = 1 + \sum_{u < i | a_i = a_u}|B_u|$, that is, the last index of a row in $A$, which we filled with entries having to do with $a_i$. Assume $a_i = a_u$ and, without loss of generality, $i < u$, then for $j=1, \ldots, |B_i|$ its follows from the definition of $last$ that $last(i) + j \leq last(u)$. For $j = 1, \ldots, |B_i|$, denote  $row(i, j) = last_i + j$. For the triplet $(a_i, b_i, B_i)$, for $1 \leq j \leq |B_i|$, entries will be added to rows $row(i, j)$. Eventually, $row(i, j)$, will hold a permutation, where $a_i$ will be mapped to itself and $b_i$ will be mapped to $(B_i)_j$. For example, if all $a_{i}'s$ are different, then $last_i = 1$ for all $i$. If, for example, all the $a_{i}'s$ are the same, each $row(i, j)$ is a different row in the matrix. See Example~\ref{example_multilexing_set} below for a set $S$, its corresponding matrix and the corresponding function $row$.

More formally, for $(a_i, b_i, B_i) \in S$, for $j = 1, \ldots, |B_i|$ we have $A[row(i, j)][a_i] = a_i$ and $A[row(i, j)][b_i] = (B_{i})_{j}$. Note that since $R(S) \leq \ell - 1$, we have that $row(i, j) \leq l$, and so, $\ell$ rows suffice for $A$.
For a columns set $T \subseteq [k]$, we will use the notation $A[i][T]$ for $\{A[i][j] : j \in T\}$.
We claim that the above $A$ can be extended to accommodate the following restrictions:
\begin{enumerate}
	\item Each row of $A$ is a permutation over $[k]$.
	\item For each $1 \leq i \leq |S|, 1 \leq j \leq |B_i| : A[row(i, j)][B_i] = B_i \union \{b_i\} \setminus \{(B_{i})_ j\}$.
\end{enumerate}
First we argue that so far, there is no row in $A$ with a value appearing more than once. Since the sets $\left\{b_i\right\} \bigcup B_i$ are always disjoint (from one another, and from $\{a_i\}$'s, as $S$ is a filtering set), the only possible repetition can be between $a_i's$. That is, there are $i_1 < i_2, j_1, j_2$ such that $1 \leq j_1 \leq |B_{i_1}|, 1 \leq j_2 \leq |B_{i_2}|, a_{i_1} = a_{i_2}$ and $row(i_1, j_1) = row (i_2, j_2)$. That is, $last(i_1) + j_1 = last(i_2) + j_2$. As explained before, $last(i_1) + j_1 \leq last(i_2) < last(i_2)+ j_2$, so that leads which is a contradiction.
Given that currently $A$ does not have any repeating value in any of its rows, and that the above two requirements of the extension, only pose restrictions per row, on values which currently do not appear in the row, it is obvious that $A$ can be completed as required.

From now on, we can assume that for each $1 \leq r \leq \ell$, the row $A[r]$ is a permutation which will be denoted as $\pi_r$ and the protocol $Q_{\pi_r}$ is a suitable $\NOF_{G_{\pi_r}}$ protocol for computing $f$, as guaranteed by Claim~\ref{claim}. That is, $Q$ itself is $\pi_r$-pattern-robust.

For $(a_i, b_i, B_i) \in S$, let $R_i = \{row(i, j) : 1 \leq j \leq |B_i| \}$ and $S^* = \{ (a_i, b_i, R_i) : (a_i, b_i, B_i) \in S \} $. We prove that $S^*$ is a $\MYPRODG$-multiplexing set.  We start by proving that for every $1 \leq i \leq |S|$, the triplet $(a_{i}, b_{i}, R_{i})$ is $\MYPRODG$-good.

Condition~(1) of Definition~\ref{def:good} is met, as $A$ was built such that $A[row(i, j)][a_i] = a_i$.
$\MAP_{\MYPROD, b_{i}, R_{i}}$ is simply $\{\pi_{r}(b_{i}) : r \in R_{i} \} =  \{\pi_{r}(b_{i}) : r \in \{row(i, j) : 1 \leq j \leq |B_{i}| \} \} =  \{\pi_{row(i, j)}(b_{i}) : 1 \leq j \leq |B_{i}| \} $. Due to the construction of $row$, this is $B_{i}$. It follows that  $|\MAP_{\MYPROD, b_i, R_i}|$ is in fact $|B_i|$, which by the definition of $R_i$, and the way $A$ is built, is also $|R_i|$.
As the given $S$ is a filtering set, it follows that $B_{i} \subseteq \overline{N_{G}(a_i)}$. It also follows that $b_i \notin B_i$, that is $b_i \notin \MAP_{\MYPROD, b_{i}, R_{i}}$. Thus, condition~(2) of Definition~\ref{def:good} is met.
It is left to prove condition~(3) in Definition~\ref{def:good}. Let $r \in R_i$, we want to prove that $\MAP_{\MYPROD, b_i, R_i} \bigcup \{b_i\} \backslash \{\pi_r(b_i)\} \subseteq \overline{N_{G_{\pi_r}}(a_i)}$. That is, $B_{i} \bigcup \{b_i\} \backslash \{\pi_r(b_i)\} \subseteq \overline{N_{G_{\pi_r}}(a_i)}$. Which is equivalent to $B_{i} \bigcup \{b_i\} \backslash \{\pi_r(b_i)\} \subseteq  \overline{\pi_r(N_{G}(a_i))}$ (We use the fact that $\pi_r(a_i) = a_i$, as this is how $A$ was built). The way $A$ is built makes sure that  $\pi_r(B_i) = B_{i} \bigcup \{b_i\} \backslash \{\pi_r(b_i)\}$. So we need to show that $\pi_r(B_i) \subseteq {\pi_r(N_{G}(a_i))}$. This simply follows from $B_{i} \subseteq \overline{N_{G}(a_i)}$).

Let $(a_{i_1}, b_{i_1}, R_{i_1})$, $(a_{i_2}, b_{i_2}, R_{i_2})$ be two distinct triplets in $S^*$ and let $(a_{i_1}, b_{i_1}, B_{i_1})$, $(a_{i_2}, b_{i_2}, B_{i_2})$ be the respective triplets in $S$. That is, $a_{i_1} \neq b_{i_1}$ and $a_{i_2} \neq b_{i_2}$.  To prove that $S^{*}$ is a $\MYPRODG$-multiplexing set, we need to prove that $a_{i_1} \notin \{b_{i_2}\} \bigcup \MAP_{\MYPROD, b_{i_2}, R_{i_2}}$ and $(\{b_{i_1}\} \bigcup \MAP_{\MYPROD, b_{i_1}, B_{i_1}}) \bigcap (\{b_{i_2}\} \bigcup \MAP_{\MYPROD, b_{i_2}, R_{i_2}}) = \emptyset$. The conditions translate to $a_{i_1} \notin \{b_{i_2}\} \bigcup B_{i_2}$ and $\{b_{i_1}\} \bigcup B_{i_1}) \bigcap (\{b_{i_2}\} \bigcup B_{i_2}) = \emptyset$, which is true as $S$ is a filtering set.

We can now apply Theorem~\ref{thm:second}, and get the desired result.
\qed
\end{proof}

\bigskip

The following example demonstrates the construction of a matrix $A$ as in the above proof.

\begin{example}
\label{example_multilexing_set}

Let $G = (V, E)$ be a restriction graph for $9$ parties, where the set of edges $E = \left\{(1, 2), (1, 5), (7, 8)\right\}$
\end{example}

Take the ordered set $S_1 = (1, 2, \left\{3, 4\right\}),  S_2 = (1, 5, \left\{6\right\}), S_3 = (7, 8, \left\{9\right\})$.
As $R(1) = 3, R(7) = 1$ and both are bound by $\ell - 1 = 3$. So the set is a $4$-filtering set of size $3$.
The appropriate $row's$ are $row(1, 1) = 2, row(1, 2) = 3, row(2, 1) = 4, row(3, 1) = 2$. The corresponding matrix is
$$A =
\left( {\begin{array}{*{20}c}
   1 & 2 & 3 & 4 & 5 & 6 & 7 & 8 & 9 \\
   1 & 3 & ? & ? & ? & ? & ? & 9 & ? \\
   1 & 4 & ? & ? & ? & ? & ? & ? & ? \\
   1 & ? & ? & ? & 6 & ? & ? & ? & ? \\
 \end{array} } \right)
$$

\begin{example}
We show that Theorem~\ref{thm_first} implies Proposition~\ref{prop_eq_multiple}. Let $k$ be as in Proposition~\ref{prop_eq_multiple}. Let $G = ([k], [k] \times [k] \setminus \left\{(i, i) : i \in [k]\right\} \setminus \left\{(k, 4), \ldots, (k, k - 1) \right\})$ be a restriction graph. I.e, $G$ is a complete digraph, except that it doesn't have edges from the node $k$ to any of the nodes in $[4, k - 1]$. We describe a $\NOF_G$ protocol $Q$ for $f$: party $P_{k}$ sends $P_1$ the bit $1$ if $x_1 = x_2$, and $0$ otherwise. If $P_2$ got $1$ and all other inputs are equal, it outputs $1$, otherwise it outputs $0$. For the $\frac{k - 3}{2}$-size $(\frac{k - 1}{2})$-filtering set $S = \left\{(k, 2, \left\{4, 6, \ldots k - 1\right\})\right\}$, we apply Theorem~\ref{thm_first}. It follows that $D^{\NOF}(EQ_{k}^{\frac{k - 1}{2}}) \leq \frac{k - 1}{2} \cdot 1 + \frac{k - 1}{2} - 1 \cdot(\frac{k - 3}{2}) = 1 + \frac{k - 1}{2}$.
\end{example}

\medskip

\section{Multiplexing NOF protocols via the Myopic Multi-Party Model}
\label{app-myopic}

In this section, we show that given so-called $\MYOPIC$ protocols for $\ell$ permutations of their $k$ parties, that satisfy certain combinatorial requirements, we can ``multiplex'' those protocols into one $\NOF$ protocol, obtaining some savings in communication complexity. We start with some definitions.

\begin{definition}
The $k$-party one-way $\NOF$ model is identical to the $\NOF$ model, except that the communication pattern is restricted to the following ``one-way'' manner: the parties are organized according to some permutation $\pi$ and each party, in its turn, sends a single message to the next party according to $\pi$ (namely, party $P_{\pi(i)}$ sends a message to $P_{\pi(i+1)}$).  The last party announces the output. In the {\em myopic} one-way model, there is an additional restriction
that party $P_{\pi(i)}$ sees only the inputs of the pervious parties and the following party. That is, $P_{\pi(1)}, \ldots, x_{\pi(i-1)}$ and $x_{\pi(i+1)}$.  For a permutation $\pi$ over $[k]$, we denote $\MYOPIC_{\pi}$ as the myopic model with $\pi$ being the order of parties.
\end{definition}

As an example of a myopic protocol, see the protocol in the proof of Proposition~\ref{prop_eq}, where instead of $P_{k-1}$ broadcasting its message, it sends its message to $P_k$ and $P_k$ announces the output.

The myopic model was first discussed in~\cite{G06}, where it was investigated with regards to the Pointer Jumping problem. The additional restriction, compared to the one-way model, was used to show better lower bounds on the communication complexity of randomized protocols. It was further studied in~\cite{C07}, where better lower bounds were shown.

 A $\MYOPIC_{\pi}$ protocol $Q$, can be thought of as a special case of a restricted-$\NOF$ protocol, for the graph $G_{\pi}$, where $G = ([k], \{(i, j) : (j < i) \vee (j = i + 1)\})$. In addition, another restriction on $Q$, is that messages are sent in a ``one-way'' fashion, according to $\pi$. That is, if $|m^{Q}_{i \rightarrow j, t}| > 0$ for some input $\vec{x}$, then $\pi(t) = i, \pi(t+1) = j$. Therefore, the next result may be thought of as corollary of Theorem$~\ref{thm_first}$, except that, we do not require the protocol $Q$ to be oblivious (nor do we require that other protocols which will be used, will have the same communication pattern as $Q$). Thus, we get a more general result.

We now describe, informally, which restrictions are required of two $\MYOPIC$ protocols $Q_{\pi_1}, Q_{\pi_2}$ to be able to multiplex messages sent in them. Assume we have $k$ parties and we want to multiplex messages sent by $P_i$. So, $P_i$ needs to appear as the $t^{th}$ party for some $t \in [k - 1]$, in $Q_{\pi_1}, Q_{\pi_2}$. We also need the parties which receive the messages sent by $P_i$ in those permutations to be different. That is, we need $|\{\pi_1(t + 1), \pi_2(t + 1)\}| = 2$. Next, in analogy to the requirements for multiplexing in the restricted model, the message $P_i$ sends $P_{\pi_1(t + 1)}$ in $Q_{\pi_1}$, should not depend on $x_{\pi_2(t+1)}$ (similarly for $P_{\pi_{2}(t + 1)}$ in $Q_{\pi_2}$ and $x_{\pi_1(t+1)}$). To accomplish this, we require that $P_{\pi_2}(t + 1)$ does not appear in $\pi_1$ before the $t^{th}$ place.
Lastly, we need that messages sent to $P_i$ in $Q_{\pi_1}$ and $Q_{\pi_2}$, will not be multiplexed, so in the combined $\NOF$ protocol, these messages will appear plainly on the board, so $P_{\pi_{1}(t + 1)}, P_{\pi_{2}(t + 1)}$ can see them. These requirements can be extended to accommodate multiplexing $\ell$ protocols.

Since we don't assume that $Q_{\pi_1}, Q_{\pi_2}$ were oblivious, the savings we have made in the communication complexity, are per input $\vec{x}$, and therefore the resulting Theorem has a different expression than the one in Theorem~\ref{thm:second}, for the bound on communication complexity of the direct-sum.

We formulate the necessary requirements for $\MYOPIC$ protocols to be multiplexed, in an analogy to Definition$~\ref{def:multiplexing}$.
\medskip

\begin{definition}
Let $\MYPROD = \pi_1, \ldots, \pi_\ell$ be a sequence of $\ell$ permutations over $[k]$. A triplet $(i, s, U)$ is $\MYPROD$-binding if $i, s \in [k]$ and $U \subseteq [\ell]$ such that:
\begin{enumerate}
\item For all $u \in U : \pi_{u}(i) = s$.
\item $|\pi_{u}(i+1) : u \in U| = |U|$.
\item $\{ \pi_{u}(i') : u \in U, i'<i \} \bigcap \{ \pi_{u}(i+1) : u \in U \} = \emptyset$.
\end{enumerate}

The definition guarantees a sequence of $|U|$ permutations, such that for each of the permutations, in step $t$, the party $P_s$ is the one who sends a message. Also guaranteed, is that all the recipients of those messages, according to the $|U|$ permutations, are parties which have not appeared before $P_s$ in any of the $|U|$ permutations.
\end{definition}

\begin{definition}
Let $\MYPROD$ be a sequence of $\ell$ permutations over $[k]$. A $\MYPROD$-repetitive set $S$ is a set of $\MYPROD$-binding triplets, such that for all distinct $(t_1, s_1, U_1), (t_2, s_2, U_2) \in S:$

\begin{enumerate}
\item $s_1 \notin \{ \pi_{u}(t_2 + 1) : u \in U_2 \}$.
\item if $(t_1, s_1) = (t_2, s_2)$ then $U_1 \bigcap U_2 = \emptyset$.
\end{enumerate}
\end{definition}

\begin{theorem}
\label{thm:third}
Let $\MYPROD$ be a sequence of $\ell$ permutations over $[k]$. Let $f$ be a $k$-argument function and $S$ a $\MYPROD$-repetitive set. Let $Q_{\pi_1}, \dots, Q_{\pi_\ell}$ be $\ell$ protocols corresponding to the computation of $f$ in $\MYOPIC_{\pi_i}$. Also assume that messages sent during those protocols are prefix-free. Denote $Cost_{\vec{x}, u, t}$ as the number of bits sent in the message from $\pi_u(t)$ to $\pi_u(t+1)$, in protocol $Q_{\pi_u}$, on input $\vec{x}$.
For a $\MYPROD$-binding triplet $(t, s, U)$ denote $TCost_{\vec{x}, U, t} = \sum_{u \in U} Cost_{\vec{x}, u, t}$, meaning, the total number of bits sent from party $P_s$ to parties in $\{\pi(u)(t + 1) : u \in U\}$, in the corresponding protocols. Also denote $MCost_{\vec{x}, U, t} = \max_{u \in U} Cost_{\vec{x}, u, t}$, as the maximal length of a message, from the aforementioned messages. Then,
$$D^{\NOF}(f^{\ell}) \leq \max_{\vec{x}}(\sum_{u = 1 \ldots l}\sum_{t = 1 \ldots \ell - 1} Cost_{\vec{x}, u, t} - \sum_{(a, s, U) \in S} (TCost_{\vec{x}, U, a} - MCost_{\vec{x}, U, a})).$$
\end{theorem}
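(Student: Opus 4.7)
The plan is to proceed in direct analogy with the proof of Theorem~\ref{thm:second}, but with two new wrinkles: the underlying protocols need not be oblivious (so cost is accounted per input $\vec{x}$), and we must carefully handle messages of varying lengths using the prefix-free assumption. I would first construct a combined $\NOF$ protocol $Q'$ for $f^{\ell}$ by simulating the $\ell$ myopic protocols $Q_{\pi_1},\ldots,Q_{\pi_\ell}$ in parallel on the shared board, where $Q_{\pi_u}$ is responsible for the $u$-th instance. Any message not covered by a triplet of $S$ is simply written plainly on the board by its sender. For each $\MYPROD$-binding triplet $(t,s,U)\in S$, at round $t$ party $P_s$, which equals $\pi_u(t)$ for every $u\in U$ by condition~(1) of binding, computes the $|U|$ messages $m^{Q_{\pi_u}}_{s\to \pi_u(t+1),t}$ for $u\in U$, pads each to the common length $MCost_{\vec{x},U,t}$ with zeros, and writes their xor on the board.

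Next I would verify correctness by showing that each intended recipient $P_{\pi_{u_0}(t+1)}$ (for $u_0\in U$) can de-multiplex. In the $\NOF$ model $P_{\pi_{u_0}(t+1)}$ sees every input except $x_{\pi_{u_0}(t+1)}$, and each of the other xored messages $m^{Q_{\pi_{u'}}}_{s\to \pi_{u'}(t+1),t}$ depends on $P_s$'s view in $Q_{\pi_{u'}}$ (the inputs $x_{\pi_{u'}(1)},\ldots,x_{\pi_{u'}(t-1)},x_{\pi_{u'}(t+1)}$) together with the history of messages previously received by $P_s$ in $Q_{\pi_{u'}}$. Condition~(2) of binding gives $\pi_{u_0}(t+1)\neq \pi_{u'}(t+1)$, and condition~(3) of binding gives $\pi_{u_0}(t+1)\notin\{\pi_{u'}(i'):i'<t, u'\in U\}$, so every input needed for the other messages is visible to $P_{\pi_{u_0}(t+1)}$. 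Condition~(1) of the $\MYPROD$-repetitive set guarantees that $s$ does not appear as some $\pi_u(t'+1)$ in any other triplet, so no message ever received by $P_s$ is itself multiplexed, and hence the whole incoming history of $P_s$ appears plainly on the board. Therefore $P_{\pi_{u_0}(t+1)}$ can reconstruct each of the other $|U|-1$ messages, xor them with the board value to obtain its own zero-padded message, and use the prefix-free assumption to locate the boundary of its actual message.

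Then I would check that each message is multiplexed at most once: if two distinct triplets shared the same sender at the same round, condition~(2) of the repetitive set forces their $U$-sets to be disjoint, so the multiplexing groups partition the participating messages cleanly. Finally, I would tally the cost on each input $\vec{x}$: the combined protocol writes $\sum_{u,t} Cost_{\vec{x},u,t}$ bits minus, for every $(t,s,U)\in S$, the saving $TCost_{\vec{x},U,t}-MCost_{\vec{x},U,t}$ obtained by collapsing $|U|$ messages into a single xor of length $MCost_{\vec{x},U,t}$. Since each $Q_{\pi_u}$ ends with the last party announcing the output, this announcement, when translated to a board write in the $\NOF$ simulation, is automatically seen by all parties, so no extra cost is incurred for distributing outputs. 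Taking the maximum over $\vec{x}$ yields the claimed bound. The main technical obstacle I anticipate is the padding-and-decoding step for messages of unequal length: the prefix-free assumption does the work of letting a recipient recover its actual message from a zero-padded version, and this is the one place in the argument where non-obliviousness of $Q_{\pi_u}$ is paid for.
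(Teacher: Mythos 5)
Your proposal is correct and follows essentially the same route as the paper's proof: parallel board simulation of the $\ell$ myopic protocols, xor-multiplexing the $|U|$ messages of each binding triplet after padding to the maximal length, de-multiplexing via conditions (2) and (3) of bindingness plus condition (1) of the repetitive set (which keeps $P_s$'s incoming history on the board), and the prefix-free property to recover each recipient's true message length. The cost accounting per input $\vec{x}$ and the observation that the final output announcement is already visible on the board also match the paper's argument.
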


\begin{proof}
We describe a protocol $Q'$ for $f^\ell$ in the $\NOF$ model. We will use protocol $Q_i$ to solve $f$ on the $i^{th}$ instance of $f^\ell$. Protocol $Q'$ will run all those protocols in parallel and will perform multiplexing when possible. In round $t$, for each instance $i$, the corresponding step of protocol $Q_i$ is made on the $i^{th}$ instance, only that instead of actually sending the message on a private channel, as in the myopic model, the corresponding party writes it on the board. To reduce the amount of communication, when performing round $t$, for every $(t, s, U) \in S$, instead of $P_s$ sending $|U|$ messages to each $\pi_u(t + 1)$ for $u \in U$, party $P_s$ xors those $|U|$ messages together, and writes on the board the xored message. We argue that each of the original recipients can then restore the original message it was supposed to receive.
We take $u_1 \in U$, and show that party $\pi_{u_1}(t + 1)$ can retrieve its message. We need to show that for each $u \in \{U \backslash u_1\}$, party $\pi_{u_1}(t + 1)$ can compute the message sent at round $t$ from $P_s$ to $P_{\pi_u(t+1)}$ in $Q_u$. This message may depend, by the definition of $\MYOPIC_{\pi_u}$, on inputs of $\{P_{\pi_u(t')}| t' \leq t + 1, t' \neq t \}$ and on the message received by $P_s$ in $Q_u$. By condition~(3) of the definition of a $\MYPROD$-repetitive set, this is disjoint to $\{P_{\pi_{u_1(t + 1)}}\}$ (as $u, u_1 \in U$). That is, they are known to $P_{\pi_{u_1(t + 1)}}$ in the $\NOF$ model. The received message of $P_s$ in $Q_u$ is not multiplexed (as for all $(t, s, U) \neq (t_2, s_2, U_2) \in S$ it is promised that $s \notin \{\pi_u(t_2 + 1) : u \in U_2\}$), and therefore written on the board. It follows that these messages can be computed by $u_1$ in the $\NOF$ model, and that it can also compute its original message, by xoring all of them with the xored valued on the board. Note that for the case when its original message was shorter then the other multiplexed messages, it still need to know how many bits of the xored result it needs to take. For that we use the fact that messages in the protocol are prefix-free.

That is, for every $(t, s, U) \in S$, we manage to send a single message only, instead of $|U|$, when the length of the message sent is the longest massage out of the possible $|U|$ (the shorter ones can be padded to fit the length of the longest one).
\qed
\end{proof}

\begin{corollary}
\label{corol:two}
Let $\MYPROD$ be a sequence of $\ell$ permutations over $[k]$. Let $f$ be a $k$-argument function and $S$ a $\MYPROD$-repetitive set. Let $Q_{\pi_1}, \dots, Q_{\pi_\ell}$ be $\ell$ oblivious protocols corresponding to the computation of $f$ in $\MYOPIC_{\pi_i}$, such that they all have the same communication pattern. Then, $$D^{\NOF}(f^{\ell}) \leq \ell \cdot D^{\MYOPIC_{\pi_1}}(Q_1) + \ell - \sum_{(a, b, U) \in S} (|U| - 1) \cdot W_{Q_1}(a,b).$$

\end{corollary}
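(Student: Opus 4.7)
The plan is to derive this as a direct specialization of Theorem~\ref{thm:third} under the two extra hypotheses: obliviousness and a common communication pattern across all $\ell$ myopic protocols. First, I would observe that these hypotheses together collapse the quantity $Cost_{\vec{x}, u, t}$ from Theorem~\ref{thm:third} into a function of $t$ alone, say $L(t)$: obliviousness removes the dependence on the input $\vec{x}$, while sharing a communication pattern removes the dependence on $u$. In particular, $\sum_{t=1}^{k-1} L(t) = D^{\MYOPIC_{\pi_1}}(Q_1)$, which is the cost of any single $Q_{\pi_u}$ (they are all equal under the hypotheses).

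Next, for any triplet $(t, s, U) \in S$ in the notation of Theorem~\ref{thm:third}, the total cost $TCost_{\vec{x}, U, t}$ is just $|U| \cdot L(t)$ and the maximum $MCost_{\vec{x}, U, t}$ is $L(t)$, so the per-triplet saving is exactly $(|U|-1) \cdot L(t)$. Because $Q_1$ is a myopic protocol under $\pi_1$, its sender $s$ transmits only at round $t = \pi_1^{-1}(s)$, and only to $\pi_1(t+1)$; hence $L(t) = W_{Q_1}(s, \pi_1(t+1))$. Identifying the triplet $(t, s, U)$ with $(a, b, U)$ via $a = s$ and $b = \pi_1(t+1)$ then recovers the sum $\sum_{(a,b,U) \in S}(|U|-1)\cdot W_{Q_1}(a,b)$ that appears in the statement.

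Plugging these simplifications into Theorem~\ref{thm:third}, the outer $\max_{\vec{x}}$ becomes vacuous and yields $\ell \cdot D^{\MYOPIC_{\pi_1}}(Q_1) - \sum_{(a,b,U) \in S}(|U|-1)\cdot W_{Q_1}(a,b)$. The remaining $+\ell$ term in the statement accounts for the cost of distributing the $\ell$ output bits: in each $Q_{\pi_u}$ only the final party $\pi_u(k)$ learns $f$ on the $u$-th instance, so in the combined $\NOF$ protocol each of these $\ell$ parties writes its single output bit on the board, exactly as in the extra $+\ell$ bits appearing in Theorem~\ref{thm:second} and Corollary~\ref{corol-nof-ds}. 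The main obstacle, which is essentially bookkeeping rather than mathematical, is reconciling the triplet notation $(t,s,U)$ used inside Theorem~\ref{thm:third} with the $(a,b,U)$ notation used here, and explicitly adding back the $+\ell$ output-broadcast bits, since Theorem~\ref{thm:third} as stated bounds only the multiplexed message communication and leaves the final output announcements implicit.
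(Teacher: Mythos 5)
Your proposal is correct and matches the paper's intent: the paper states this result without an explicit proof, presenting it precisely as the specialization of Theorem~\ref{thm:third} to oblivious protocols sharing a communication pattern (and remarks it can equivalently be seen via Theorem~\ref{thm:second}). Your bookkeeping — collapsing $Cost_{\vec{x},u,t}$ to a function of the round alone, computing the per-triplet saving $(|U|-1)\cdot L(t)$, reconciling the $(t,s,U)$ versus $(a,b,U)$ notation, and accounting for the $+\ell$ output-announcement bits left implicit in Theorem~\ref{thm:third} — is exactly what is needed.
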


Note that Corollary~\ref{corol:two} can also be viewed as a a Corollary to Theorem~\ref{thm:second}, as for the case of oblivious protocols, Theorem~\ref{thm:third} is a private case of Theorem~\ref{thm:second}.

\end{document}